\documentclass[11pt]{article}
\usepackage[margin=1in]{geometry}
\usepackage{algorithmicx}

\usepackage[utf8]{inputenc}
\usepackage{authblk}
\usepackage{amsmath, amssymb, amsthm, thmtools, amsfonts, bm, bbm, thm-restate}
\usepackage{algorithm}
\usepackage{algorithmicx}
\usepackage[noend]{algpseudocode}
\usepackage{cite}
\usepackage[numbers,sort]{natbib}

\usepackage{asymptote}
\usepackage{graphicx}
\usepackage{todonotes}
\usepackage{multirow}
\usepackage{comment}
\usepackage{dsfont}
\usepackage{bigstrut}
\usepackage{caption}
\usepackage{subcaption}
\usepackage{multirow}
\usepackage{makecell}
\usepackage{booktabs}
\usepackage{xcolor}

\usepackage{commath}

\definecolor{BrickRed}{rgb}{0.8,0.25,0.33}
\usepackage[pagebackref,colorlinks,citecolor=blue,linkcolor=BrickRed]{hyperref}

\usepackage{enumitem}
\usepackage[normalem]{ulem}

\usepackage[framemethod=tikz]{mdframed}
\usepackage{nicefrac}

\usepackage{tikz}
\usetikzlibrary{positioning, fit, calc}

\usepackage{pifont}% http://ctan.org/pkg/pifont

\usepackage{cleveref}

\newtheorem{thm}{Theorem}[section]
\newtheorem{fact}[thm]{Fact}
\newtheorem{lem}[thm]{Lemma}
\newtheorem{lemma}[thm]{Lemma}

\newtheorem{rem}[thm]{Remark}

\crefname{thm}{Theorem}{theorems}
\crefname{cla}{Claim}{claims}
\crefname{lem}{Lemma}{lemmas}
\crefname{fact}{Fact}{facts}

\crefname{cor}{Corollary}{corollaries}

\allowdisplaybreaks

\newcommand{\e}{\texttt{e}}

\let\vec\mathbf
\renewcommand{\vec}{\mathbf}

\title{Asymptotically Tight Fractional Online Matching Under Edge Arrivals}
\author{David Wajc\thanks{Supported by  ISF grant 3200/24. Work done while visiting the University of Waterloo.}\\
\vspace{-0.25cm}Technion}
\date{\vspace{-1.5cm}}

\begin{document}

\maketitle

\begin{abstract}
    In this brief note, we close the asymptotic gap between known upper and lower bounds for fractional online matching under edge arrivals. We prove that the optimal competitive ratio for this problem is $1/2+\Theta(1/n)$. The algorithm was suggested and analyzed by OpenAI's ChatGPT Sol based on a single prompt. The presentation was streamlined over a few hours, based on a back and forth discussion with the author, who assumes responsibility for any errors.
\end{abstract}

\section{Introduction}

Online matching is an intensely studied problem (see surveys \cite{mehta2013online,huang2024online}). In the most basic setting, edges of an $n$-node graph arrive online sequentially, and one must decide immediately and irrevocably whether to match each arriving edge. A trivial greedy algorithm achieves a competitive ratio of $1/2$: it outputs a matching of size at least $OPT/2$, 
where $OPT$ denotes the size of a hindsight-optimal matching.
Competitive ratios of $1/2+\Omega(1)$ are achievable if \emph{nodes} are revealed sequentially \cite{karp1990optimal,gamlath2019online,huang2018match}. In contrast, in the stricter edge-arrival model, the optimal competitive ratio is $1/2+o(1)$: \cite{buchbinder2019online,lee2020maximum} achieve a ratio of $1/2+\exp^{-\Theta(n)}$, while \cite{gamlath2019online} show that no ratio above $1/2+O(1/n)$ is attainable, even by \emph{fractional} algorithms. 

Fractional algorithms assign each edge $e$ a value $x_e$ on arrival so that $\vec{x}$ is a \emph{fractional matching}:  $x_e\geq 0$ for each edge $e$ and  $\sum_{f\ni v} x_f\leq 1$ for each~node~$v$.
Such an algorithm has value $ALG\triangleq \sum_e x_e$, and it is \emph{$c$-competitive} if $ALG\geq c\cdot OPT$.

In this note we close this asymptotic gap, giving a $1/2+\Omega(1/n)$-competitive fractional algorithm. 
Together with the result of \cite{gamlath2019online}, this shows that the optimal competitive ratio is $1/2+\Theta(1/n)$.

\section{The Algorithm and Analysis}

We associate each edge $e\in E$ with its time of arrival in the sequence. We therefore use $e<e'$ to mean that $e$ precedes $e'$ in the arrival sequence. We similarly denote by $e+1$ the time immediately after the arrival of $e$ and subsequent assignment of $x_e$. With this notation, we let 
$$\ell_v^e \triangleq \sum_{\substack{e'<e\\e'\ni v}}x_{e'}$$
denote the \emph{load} of node $v$ before time $e$. We also define the \emph{residual capacity} of $v$ as
$$r_v^e \triangleq 1-\ell_v^e.$$
With this notation, the algorithm is particularly simple to state:
\begin{algorithm}[h]
	\caption{Fractional Online Matching}\label{alg}
 \begin{algorithmic}[1]
    \For{\textbf{each} arriving edge $e$, on arrival}
    \State Set: $x_e\gets \min\left\{ r_u^e,\;r_v^e,\;\frac{(r_u^e+r_v^e - \frac{1}{2})^+}{2}\right\},$ where $z^+\triangleq \max\{z,\;0\}.$
    \EndFor
\end{algorithmic}	
\end{algorithm}

Intuitively, after processing an edge, either one endpoint is saturated or its endpoints have combined load at least $3/2$. This dichotomy drives the analysis.

Before proving this intuitive property, we first verify feasibility and monotonicity of the loads.

\begin{fact}[Feasibility]\label{fact:feasible}
    \Cref{alg} outputs a  fractional matching.
\end{fact}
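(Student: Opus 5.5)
The plan is an induction over the arrival sequence. The invariant is that just before each edge $e$ arrives, every node $w$ has load $\ell_w^e\in[0,1]$, or equivalently residual capacity $r_w^e\in[0,1]$. This holds initially, since all loads are zero. Once the invariant is shown to hold after every arrival, the final vector $\vec{x}$ satisfies $\sum_{f\ni v}x_f\le 1$ for every node $v$, as the final load is the load just before an imaginary arrival after the last edge. Nonnegativity of the $x_e$ will come out of the same inductive step.

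Consider an arriving edge $e=(u,v)$ and assume $r_u^e,r_v^e\in[0,1]$. The value $x_e$ is the minimum of three quantities: $r_u^e\ge 0$, $r_v^e\ge 0$ by hypothesis, and $\frac{1}{2}(r_u^e+r_v^e-\frac{1}{2})^+\ge 0$ by definition of $z^+$. Hence $x_e\ge 0$. Since $x_e\le r_u^e$ and $x_e\le r_v^e$, the new residual capacities satisfy
\[
r_u^{e+1}=r_u^e-x_e\in[0,r_u^e] \quad\text{and}\quad r_v^{e+1}=r_v^e-x_e\in[0,r_v^e].
\]
So both stay in $[0,1]$. Every other node's residual capacity is unchanged, because only $e$'s two endpoints have their load increased. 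This closes the induction.

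As a byproduct, loads are nondecreasing over time, which is the "monotonicity" the text mentions alongside feasibility and which I would record in the same breath. There is no real obstacle here. The only point needing care is to keep the nonnegativity of residuals inside the inductive hypothesis, since that is exactly what makes the first two terms of the minimum nonnegative. The third term is nonnegative unconditionally, so the clamp $(\cdot)^+$ is what prevents a negative assignment when $r_u^e+r_v^e<\frac12$.
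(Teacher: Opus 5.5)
Your proof is correct and is the same induction the paper uses. The bound $x_e\le r_w^e$ keeps residual capacities nonnegative, and nonnegativity of all three terms in the minimum gives $x_e\ge 0$; you are slightly more explicit than the paper that the inductive hypothesis $r_u^e,r_v^e\ge 0$ is what makes the first two terms nonnegative.
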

\begin{proof}
   We prove inductively that all residual capacities remain non-negative. Since $x_e\leq r^e_w$ for each $w\in e$, we get that $r^{e+1}_w = r^e_w - x_e \geq 0$. Consequently, $\ell_v^{e+1} \leq 1$ for each vertex $v$ and edge~$e$.
   On the other hand, for each edge $e$, each of the three quantities in the minimum defining $x_e$
    is nonnegative, and so $x_e\geq 0$. 
\end{proof}
Let $\ell_v$ and $r_v\triangleq 1-\ell_v$ denote the load and residual capacity of node $v$ after all edges have their value set. 
    We say a node $v$ is \emph{saturated} if $\ell_v = 1$. Note that $\ell_v = \sum_{e\ni v} x_e \geq \ell_v^e$, with the inequality following by 
non-negativity of $\vec{x}$ (\Cref{fact:feasible}).

The algorithm is designed with the following lemma in mind:
\begin{lemma}[Edge-load dichotomy]\label{fact}
For every arriving edge $e=(u,v)$, the final loads satisfy
\[
\ell_u+\ell_v \geq1.
\]
Moreover, if both $u$ and $v$ are unsaturated at termination, then
\[
\ell_u+\ell_v \geq\frac{3}{2}.
\]
\end{lemma}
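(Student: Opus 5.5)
The plan is to prove the claims at time $e+1$ and then lift them to termination by monotonicity. Since $\ell_w = \ell_w^{e+1} + \sum_{f>e, f\ni w} x_f \geq \ell_w^{e+1}$ by non-negativity of $\vec{x}$ (\Cref{fact:feasible}), it suffices to show, for $e=(u,v)$, that either one endpoint is saturated right after $e$ is processed ($r_u^{e+1}=0$ or $r_v^{e+1}=0$), or $\ell_u^{e+1}+\ell_v^{e+1}\geq 3/2$. The first case gives $\ell_u+\ell_v\geq 1$ because a saturated node stays saturated; indeed, then that endpoint has final load exactly $1$. The second case gives $\ell_u+\ell_v\geq 3/2\geq 1$. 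The ``moreover'' part then follows: if both $u$ and $v$ are unsaturated at termination, neither was saturated at time $e+1$, since loads only grow and are capped at $1$. Hence we must be in the second case, so $\ell_u+\ell_v\geq \ell_u^{e+1}+\ell_v^{e+1}\geq 3/2$.

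To establish the dichotomy at time $e+1$, write $a=r_u^e$ and $b=r_v^e$, and split according to which term attains the minimum defining $x_e$. If $x_e=a$ or $x_e=b$, that endpoint's residual becomes $0$ and it is saturated. Otherwise $x_e=(a+b-\tfrac12)^+/2$, and we split again on the sign of $a+b-\tfrac12$.
\begin{itemize}
\item If $a+b\leq \tfrac12$, then $x_e=0$ and the combined residual $a+b\leq 1/2$ already gives combined load $2-(a+b)\geq 3/2$.
\item If $a+b>\tfrac12$, the new combined residual is $a+b-2x_e = a+b-(a+b-\tfrac12)=\tfrac12$, so the combined load is exactly $3/2$.
\end{itemize}
In all cases either an endpoint is saturated at time $e+1$ or $\ell_u^{e+1}+\ell_v^{e+1}\geq 3/2$.

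There is no real obstacle here. The only points needing care are that ties in the minimum are harmless, since the case analysis covers whichever term is attained, and that ``saturated at time $e+1$'' persists to termination. Persistence follows from feasibility: the final load is at least $\ell_w^{e+1}=1$ and at most $1$.
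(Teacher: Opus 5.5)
Your proof is correct and follows the paper's argument. Both show that right after $e$ is processed, either an endpoint is saturated or the combined residual of $u,v$ is at most $\tfrac12$, and then lift this to termination via load monotonicity; your split on the sign of $a+b-\tfrac12$ is the paper's split on $x_e=0$ versus $x_e>0$. The only difference is organizational: you prove the dichotomy at time $e+1$ and derive both claims from it, whereas the paper starts by assuming both endpoints are unsaturated at termination.
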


\begin{proof}
As $\vec{\ell}$ is non-negative by \Cref{fact:feasible}, the claim is trivial if either endpoint is saturated. Hence, to prove the stronger assertion, suppose both endpoints are unsaturated at termination. By load monotonicity ($\ell_v\geq \ell_v^{e+1}$), neither is saturated immediately after processing $e$. So, $x_e<\min\{r_u^e,\;r_v^e\}$.

If
$0<x_e<\min\{r_u^e,r_v^e\},$
then
$x_e=\frac{r_u^e+r_v^e-\frac12}{2}>0$.
Consequently,
\[
r_u^{e+1}+r_v^{e+1}
=
r_u^e+r_v^e-2x_e
=
\frac12.
\]
Conversely, if $x_e=0$, then since $x_e<\min\{r_u^e,\;r_v^e\}$, we have that $x_e = \frac{(r_u^e+r_v^e - \frac{1}{2})^+}{2} = 0$, and so
\[
r_u^{e+1}+r_v^{e+1} = r_u^e+r_v^e\le \frac12.
\]
In either case, we have by load monotonicity that 
\begin{align*}
\ell_u+\ell_v & \geq\ell_u^{e+1}+\ell_v^{e+1}
=
2-(r_u^{e+1}+r_v^{e+1})
\ge
\frac32. \qedhere
\end{align*}
\end{proof}

The following lemma asserts that if positive mass is assigned to an edge, then at least $1/2$ residual capacity remains across its two endpoints immediately afterward.
This residual slack will later force mass to leave the set of saturated nodes.

\begin{lem}[Residual slack]\label{lem:external-load}
    For each edge $e=(u,v)$ for which $x_e>0$, we have $r_u^{e+1}+r_v^{e+1}\geq \frac{1}{2}.$
\end{lem}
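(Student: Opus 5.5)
The plan is a direct case analysis on which of the three terms attains the minimum defining $x_e$. The basic identity is
\[
r_u^{e+1}+r_v^{e+1}=r_u^e+r_v^e-2x_e,
\]
so it suffices to show $2x_e\le r_u^e+r_v^e-\tfrac12$ whenever $x_e>0$.

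First, since $x_e>0$, the third term $\frac{(r_u^e+r_v^e-\frac12)^+}{2}$ is strictly positive. Hence its positive part is inactive, and it equals $\frac{r_u^e+r_v^e-\frac12}{2}$ with $r_u^e+r_v^e>\tfrac12$.

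Second, $x_e$ is a minimum, so it is at most this third term. Therefore
\[
2x_e\le r_u^e+r_v^e-\tfrac12,
\]
and substituting into the identity gives $r_u^{e+1}+r_v^{e+1}\ge \tfrac12$.

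This argument needs no case split on whether $x_e$ equals $r_u^e$, $r_v^e$ or the third term; only the upper bound by the third term is used. The only point needing care is the role of $x_e>0$: it is exactly what removes the $(\cdot)^+$ truncation. When $x_e=0$ the bound can fail, since $r_u^e+r_v^e$ may already be below $\tfrac12$.

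I expect no real obstacle. The single subtle step is the observation that positivity of $x_e$ makes the truncation inactive. Non-negativity of residuals (\Cref{fact:feasible}) is not even needed beyond the fact that the quantities are well defined.
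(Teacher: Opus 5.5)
Your proof is correct and rests on the same inequality the paper starts from, namely $x_e\le \frac{r_u^e+r_v^e-\frac12}{2}$, which holds because $x_e>0$ removes the $(\cdot)^+$ truncation. The paper then splits into cases according to whether $x_e$ equals one of $r_u^e,r_v^e$, deriving $b-a\ge\frac12$ when $x_e=a=\min\{r_u^e,r_v^e\}$; as you observe, the identity $r_u^{e+1}+r_v^{e+1}=r_u^e+r_v^e-2x_e$ makes that split unnecessary, since the paper's second case is just your inequality $2x_e\le r_u^e+r_v^e-\frac12$ rewritten with $x_e=a$.
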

\begin{proof}
    First, the lemma's hypothesis  implies that $\frac{r_u^e+r_v^e-\frac{1}{2}}{2}\geq x_e >0$.
    Now, if $x_e\notin \{r_u^e,\;r_v^e\}$, then the claimed inequality holds with equality, by the choice of $x_e= \frac{r_u^e+r_v^e-\frac{1}{2}}{2}$. Else, let $a=\min\{r_u^e,\;r^v_e\}$ and $b=\max\{r_u^e,\;r^v_e\}$. Then, by our choice~of~$x_e$,
    $$a\leq \frac{a+b-\frac{1}{2}}{2},$$
    from which we obtain that $b-a\geq \frac{1}{2}.$ But then after the update, $x_e\gets a$, we have that 
    \begin{align*}
    r_u^{e+1} + r_v^{e+1} & = r_u^{e} - a + r_v^{e} - a = 0 + b-a \geq \frac{1}{2}. \qedhere
    \end{align*}
\end{proof}

We are now ready to quantify \Cref{alg}'s improvement over the greedy algorithm.

\begin{thm}
    For every graph with $OPT\geq 1$, 
    \Cref{alg} outputs a fractional matching of value 
    $$ALG \geq \frac{OPT}{2}+\frac{1}{4}\geq OPT\left(\frac{1}{2}+\frac{1}{2n}\right).$$
\end{thm}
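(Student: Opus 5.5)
The plan is to double count loads against a maximum matching $M^*$ with $|M^*| = OPT$. The identity driving everything is $2\,ALG = \sum_v \ell_v$. I split into two cases according to whether some edge of $M^*$ has both endpoints unsaturated.

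\emph{Case 1: some edge of $M^*$ has both endpoints unsaturated.} Note that \Cref{fact} applies to every edge of the graph, in particular to each edge of $M^*$. The edges of $M^*$ are vertex-disjoint, so
\[
\sum_v \ell_v \;\ge\; \sum_{(u,v)\in M^*} (\ell_u+\ell_v).
\]
By \Cref{fact}, each term is at least $1$, and the term for the edge with both endpoints unsaturated is at least $3/2$. Hence $\sum_v \ell_v \ge OPT + 1/2$, which gives $ALG \ge OPT/2 + 1/4$.

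\emph{Case 2: every edge of $M^*$ has a saturated endpoint.} Let $S$ be the set of saturated nodes. The edges of $M^*$ are disjoint, so $|S| \ge OPT \ge 1$. Write $x(E(S))$ for the mass on edges with both endpoints in $S$, and $x(\delta(S))$ for the mass on edges with exactly one endpoint in $S$. Summing loads over $S$ gives
\[
|S| = \sum_{v\in S}\ell_v = 2x(E(S)) + x(\delta(S)).
\]
Using non-negativity of $\vec x$ (\Cref{fact:feasible}),
\[
ALG \ge x(E(S)) + x(\delta(S)) = \tfrac{|S|}{2} + \tfrac{x(\delta(S))}{2}.
\]
It therefore suffices to show $x(\delta(S)) \ge 1/2$. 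This is the main step, and it is where \Cref{lem:external-load} enters.

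\emph{Proving $x(\delta(S)) \ge 1/2$.} If no edge inside $S$ has positive value, then $x(\delta(S)) = |S| \ge 1$ and we are done. Otherwise, let $e=(u,v)$ be the last-arriving edge with both endpoints in $S$ and $x_e>0$. By \Cref{lem:external-load}, $r_u^{e+1}+r_v^{e+1}\ge 1/2$. Both $u$ and $v$ end up saturated, so this residual is eventually filled by edges arriving after $e$ that touch $u$ or $v$. By the choice of $e$, any such edge with positive value has its other endpoint outside $S$, so it lies in $\delta(S)$. A single such edge touches at most one of $u,v$, since an edge $(u,v)$ would lie inside $S$. Therefore these edges contribute at least $r_u^{e+1}+r_v^{e+1} \ge 1/2$ to $x(\delta(S))$.

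Combining the two cases gives $ALG \ge OPT/2 + 1/4$. For the second inequality, $OPT \le n/2$, so $1/4 \ge OPT/(2n)$, which gives $ALG \ge OPT\left(\frac12 + \frac1{2n}\right)$.
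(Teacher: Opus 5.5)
Your proof is correct and follows essentially the same route as the paper. You use the same case split on whether some edge of the maximum matching has both endpoints unsaturated, and in the remaining case the same application of \Cref{lem:external-load} to the last positive-valued edge inside $S$ to force $x(\delta(S))\ge 1/2$. The differences are only bookkeeping: you get $2\,ALG \ge |S| + x(\delta(S))$ from $\sum_{v\in S}\ell_v = 2x(E(S))+x(\delta(S))$ rather than by bounding $\sum_{v\notin S}\ell_v$. You also track only the residuals of $u$ and $v$, explicitly ruling out double counting, rather than the total residual of $S$.
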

\begin{proof}
    The second inequality follows from the trivial bound $OPT\leq \frac{n}{2}$. 
    
    It remains to prove the first inequality, $ALG\geq \frac{OPT}{2}+\frac{1}{4}$, or equivalently, as $\sum_v \ell_v = 2\sum_e x_e$, 
    \begin{align}\label{eqn:sufficient}
    \sum_v \ell_v - OPT \geq \frac{1}{2}. 
    \end{align}

    To prove \Cref{eqn:sufficient}, fix a maximum matching $M$ of size $|M|=OPT$. Decomposing the sum across $M$, we have that
    $$\sum_v \ell_v - OPT = \sum_{(u,v)\in M} (\ell_u + \ell_v - 1) + \sum_{w\in V\setminus V(M)} \ell_w.$$
    By \Cref{fact} and \Cref{fact:feasible}, each summand on the right-hand side, $(\ell_u + \ell_v - 1)$ for some $(u,v)\in M$ and $\ell_w$ for some $w\in V\setminus V(M)$, is non-negative. 
    So, if any edge $(u,v)\in M$ has both endpoints unsaturated, then by \Cref{fact}
    , 
    $\ell_u + \ell_v \geq \frac{3}{2}$, and so $\sum_v \ell_v - OPT \geq (\ell_u + \ell_v - 1)\geq \frac{1}{2}.$
    We may therefore assume that each edge in $M$ has at least one  saturated endpoint.
    We will show that in this case at least $1/2$ unit of mass must cross from the saturated nodes to that set's complement, implying the desired bound.

    Denote by $S\triangleq \{v \mid \ell_v = 1\}$ the saturated nodes. Since each of the (node-disjoint) edges in $M$ has at least one endpoint in $S$, we have that $|S|\geq OPT$, and so
    \begin{align}\label{eqn:desired-interesting-case}
    \sum_v \ell_v - OPT \geq \sum_{v\in V\setminus S} \ell_v.
    \end{align}

    Now, if no edge $e\in S\times S$ is assigned $x_e > 0$, then since $OPT\geq 1$, we must assign a value of at least $|S|\geq OPT\geq 1 \geq \frac12$ to edges in $S\times (V\setminus S)$.
   Otherwise, let $e=(u,v)\in S\times S$ be the last edge in $S\times S$ receiving $x_e>0$. By \Cref{lem:external-load}, immediately after assigning $x_e>0$, $$\sum_{v\in S} r^{e+1}_v \geq r^{e+1}_u + r^{e+1}_v \geq \frac{1}{2}.$$
   But, since by definition
    $$\sum_{v\in S}r_v = 0,$$
    we must further decrease the residual capacity of the nodes in $S$ by assigning value of at least $\frac{1}{2}$ to the set of edges in $S\times (V\setminus S)$.
    In either case, using non-negativity of $\vec{x}$ by \Cref{fact:feasible}, $$\sum_{v\in V\setminus S}\ell_v \geq \sum_{e\in S\times (V\setminus S)} x_e \geq \frac{1}{2}.$$
    Combined with \Cref{eqn:desired-interesting-case,eqn:sufficient}, this concludes the proof.
\end{proof}

\begin{rem}
    We leave open the question of closing the constant gap between the precise $\Theta(1/n)$ terms in the competitive ratios of \Cref{alg} ($\frac{1}{2n}$) and the impossibility bound of \cite{gamlath2019online} ($\frac{1}{n+2}$).\footnote{\cite{gamlath2019online} prove a bound of  $\frac{1}{2}+\frac{1}{2n'+2}$ for bipartite graphs with $n'$ nodes on either side, so $n=2n'$ nodes total.}
\end{rem}

\section{Reflections}

The main purpose of this note is to provide yet another example of a simple research question that can be resolved with the aid of large language models. Although the result itself is likely of limited interest to experts, the example raises again a broader question: what are good introductory research problems for beginning researchers in an era in which LLMs can quickly dispatch many accessible problems? We leave this as the main open problem.

\bibliographystyle{alpha}
\bibliography{abb,ultimate}

\end{document}